\newtheorem{theorem}{Theorem}
\journal{Insurance: Mathematics and Economics}
\begin{document}

\begin{frontmatter}

\title{Optimal Retirement Consumption with a Stochastic Force of Mortality\tnoteref{comments}}
\tnotetext[comments]{This research is supported in part by grants from NSERC, MITACS/Mprime and The IFID Centre. 
The authors acknowledge helpful comments from an IME
reviewer, as well D. Babbel, T. Davidoff, G. Harrison, M. Lachance, S.F. Leung and from
seminar participants at the Lifecycle, Insurance, Finance and Economics (LIFE)
conference at the Fields Institute, Georgia State University, McMaster
University, the Risk Theory Seminar (RTS) at Florida State University and the University of Technology, Sydney. 
A previous version of this paper circulated as: \emph{Yaari's lifecycle model in the 21st century: 
consumption under a stochastic force of mortality}}
\author[math]{Huaxiong Huang}
\ead{hhuang@yorku.ca}
\author[schulich]{Moshe A. Milevsky\corref{corresp}}
\ead{milevsky@yorku.ca}
\author[math]{Thomas S. Salisbury}
\ead{salt@yorku.ca}
\address[math]{Dept. of Mathematics \& Statistics, York University, Toronto}
\address[schulich]{Schulich School of Business, York University, Toronto}
\cortext[corresp]{Corresponding author}

\begin{abstract}

We extend the lifecycle model (LCM) of consumption over a random horizon 
(a.k.a. the Yaari model) to a world in which (i.) the
force of mortality obeys a diffusion process as opposed to being
deterministic, and (ii.) a consumer can adapt their consumption strategy to
new information about their mortality rate (a.k.a. health status) as it
becomes available. In particular, we derive the optimal consumption rate and focus on
the impact of mortality \emph{rate} uncertainty vs. simple \emph{lifetime}
uncertainty -- assuming the actuarial survival curves are initially identical -- in the
retirement phase where this risk plays a greater role.

In addition to deriving and numerically solving the PDE for the optimal
consumption rate, our main general result is that when utility
preferences are logarithmic the initial consumption rates are identical. But,
in a CRRA framework in which the coefficient of relative risk aversion is
greater (smaller) than one, the consumption rate is higher (lower) and a
stochastic force of mortality does make a difference. 

That said, numerical experiments indicate that even for non-logarithmic preferences, 
the stochastic mortality effect is relatively minor from the individual's perspective.
Our results should be relevant to researchers interested in calibrating the
lifecycle model as well as those who provide normative guidance (a.k.a.
financial advice) to retirees.

\end{abstract}

\begin{keyword}
lifecycle consumption \sep stochastic mortality \sep survival curve matching
\sep JEL codes: E21/G22
\sep Subject Category: ??
\sep  Insurance Branch Category: ??
\end{keyword}
\end{frontmatter}

\section{Introduction and Motivation}
\label{intro}

The lifecycle model (LCM) of savings and consumption -- originally postulated
by \citet{F1930} and refined by \citet{MB1954}, \citet{M1986} -- is at
the core of most multi period asset pricing and allocation models, as well as the
foundation of microeconomic consumer behavior. The original formulation -- for
example \citet{R1928} and \citet{P1962} -- assumed a deterministic horizon.
But, in a seminal contribution, the LCM was extended by \citet{Y1964,Y1965} to
a stochastic lifetime, which eventually led to the models of \citet{M1971},
\citet{R1975} and hundreds of subsequent papers on asset allocation over the
human lifecycle.

The conceptual under-pinning of the LCM is the intuitive notion of
\emph{consumption smoothing} whereby (rational) individuals seek to minimize
disruptions to their standard of living over their entire life. They
plan a consumption profile that is continuous, equating marginal
utility at all points, based on the assumption of a concave utility function.
See the recent (and very accessible) article by \citet{K2008} in which this
concept is explained in a non-technical way. 

Once again, until the seminal contribution by \citet{Y1964, Y1965}, the LCM was
employed by economists in an idealized world in which death occurred with
probability one at some terminal horizon. Menahem Yaari introduced lifetime
uncertainty into the lifecycle model, in addition to -- his more widely known
contribution of -- introducing actuarial notes and annuities into optimal
consumption theory.

In the expressions (and theorems) he derived for the optimal consumption
function, \citet{Y1965} assumed a very general \emph{force of mortality} for
the remaining lifetime random variable, without specifying a particular law.
His results would obviously include a constant force of mortality (i.e.
exponential remaining lifetime) as well as Gompertz-Makeham (GM) mortality,
and other commonly formulated approximations. Yaari provided a rigorous
foundation for Irving Fisher's claim that lifetime uncertainty effectively
increases consumption impatience and is akin to behavior under higher
subjective discount rates. Mathematically, the mortality rate was added to the
subjective discount rate.

That said, most of the empirical or prescriptive papers in the LCM literature
have not gone beyond assuming the GM law -- or some related deterministic
function -- for calibration purposes.  In other words, mortality is just a substitute 
for subjective discount rates. In fact, one is hard-pressed to differentiate high levels of
longevity and mortality risk aversion from weak preferences for consumption today vs. 
the future, i.e. patience. Some have labeled this risk neutrality with respect to lifetime uncertainty.

For example: \citet{LM1977},
\citet{D1981}, \citet{D1991}, \citet{L1994}, \citet{B2001}, \citet{BDOW2004}, 
\citet{DL2005}, \citet{KT2005}, \citet{BM2006}, \citet{P2006},
\citet{WZ2007}, \citet{F2008}, or the recent work by \citet{L2012}
-- all assume a deterministic force of mortality.

Indeed, some economists continue (surprisingly) to ignore mortality alltogether, for
example the recent review by \citet{AW2010}. Perhaps this is because when the force
of mortality is deterministic, it can be added to the subjective discount rate without 
any impact on the mathematical structure of the problem.

To our knowledge, the only authors within the financial economics literature that have considered
the possibility of non-constant mortality rates in a lifecycle model are \citet{CG2009}, 
although their Lee-Carter mortality model is not quite
\emph{stochastic} as in \citet{MP2001}, \citet{D2004}, \citet{CBD2006}, or the 
various models described in the the book by \citet{PDHO2008}, or the concerns 
expressed by \citet{N2010}.

Moreover, a number of very recent papers -- for example \citet{M2008},
\citet{S2009} and \citet{P2010} -- have examined the implications of (truly)
stochastic mortality rates on the demand and pricing of certainly annuity
products, but have not derived the impact of stochasticity on optimal
consumption alone or examined the impact of pure uncertainty in the mortality rate.

Another related paper is \citet{BV} who examine the impact of
relaxing the assumption of additively separable utility and what-they-call risk neutrality
with respect to life duration. But, they also assume a deterministic force of moratlity in
their formulation and examples. In that sense, our work is similar because we also relax 
the so-called risk neutrality and the intertemportal additivity.

In sum, to our knowledge, none of the existing papers within the LCM
literature have assumed a stochastic force of mortality -- which is 
the model of choice in the current actuarial and insurance literature --
and then derived its impact on pure consumption behavior. We believe this to
be a foundational question, and in this paper our objective is
straightforward, namely, to compare the impact of stochastic vs. deterministic
mortality rates on the optimal consumption rate.

\subsection{A Proper Comparison}
\label{comparison}

Assume that two hypothetical retirees -- i.e. consumers who are not expecting
any future labour income -- approach a financial economist for guidance on how
they should spend their accumulated financial capital over their remaining
lifetime; a time horizon they both acknowledge is stochastic. Assume both
retirees have time-separable and rational preferences and seek to maximize
discounted utility of lifetime consumption under the same elasticity of
intertemporal substitution ($1/\gamma$), the same subjective discount rate
($\rho$) and the same initial financial capital constraint ($F_{0}$). They
have no declared bequest motives and -- for whatever reason -- neither are
willing (or able) to invest in anything other than a risk-free asset with
instantaneous return ($r$); which means they are \emph{not} looking for
guidance on asset allocation or annuities.\footnote{This simplification is
made purely to focus attention on the impact of stochastic mortality.}
All they want is an optimal consumption plan
($c^{\ast}(t);t\geq0$) guiding them from time zero (retirement) to the last
possible time date of death $(t\leq D)$. Most importantly, both retirees
agree they share the same probability-of-survival curve denoted by $p(s)$.
In other words they currently live in the same health state\ and the same
effective biological age. For example, they both agree on a $p(35)=5\%$
probability that either of them survive for 35 years and a $p(20)=50\%$
probability that either of them survive for 20 years, etc.

\citet{Y1964,Y1965} showed exactly how to solve such a problem. He
derived the Euler-Lagrange equation for the optimal trajectory of wealth and
the related consumption function.

In Yaari's model both of the above-mentioned retirees would be
told to follow identical consumption paths until their random date of death.
In fact, they would both be guided to optimally consume $c(t)^{\ast
}=F(t)/a(t)$, where $a(t)$ is a \emph{function of time only} and is related to
an actuarial annuity factor. We will explain this factor in more detail, later
in the paper.

But here is the impetus for our comparison. Although both retirees appear to
have the same longevity risk assessment and agree-on the survival probability
curve $p(s)$, they have \emph{differing views about the volatility of their
health as proxied by a mortality rate volatility}. In the language of
current actuarial science, the first retiree (1) believes that his
instantaneous force of mortality (denoted by $\lambda^{\text{DfM}}(t)$) will
grow at a deterministic rate until he eventually dies, while the second
retiree (2) believes that her force of mortality (denoted by $\lambda
^{\text{SfM}}(t)$) will grow at stochastic (but measurable) rate until a
random date of death. As such, the remaining lifetime random variable for retiree
2 is doubly stochastic. While this distinction might sound farfetched and
artificial, a growing number of researchers in the actuarial literature are
moving to such models \footnote{We appreciate and acknowledge comments made by a 
referee, that models in which mortality depends on health status, which itself is stochastic,
have been used by actuaries well-before the introduction of 21st century stochastic mortality 
models.}, rather than the simplistic mortality models traditionally
used by economists. The actuaries' motivation in advocating for a
stochastic force of mortality, is to generate more robust pricing and
reserving for mortality-contingent claims. These studies have all argued that
SfM models better reflect the uncertainty inherent in demographic projections
\emph{vis a vis} the inability of insurance companies to diversify mortality
risk entirely. We ask: \emph{how do the recent actuarial models impact the
individual economics of the problem?}

When one thinks about it, real-life mortality rates are indeed stochastic,
capturing (unexpected) improvements in medical treatment, or (unexpected)
epidemics, or even (unexpected) changes to the health status of an individual.
Rational consumers choosing to make saving and consumption decisions using
models based on deterministic mortality rates would likely agree to
re-evaluate those decisions if their views about the values of those mortality
rates change dramatically. Our thesis is that economic decision-making can
only be improved if mortality models reflect the realistic evolution of mortality rates.

We will carefully explain the mathematical distinction between deterministic
and stochastic forces of mortality (SfM) in Section \ref{forceofmortality} of this paper, but
just to make clear here, at time zero both our hypothetical retirees agree on
the initial survival probability curve $p(s)$. However, at any future time
their perceived survival probability curves will deviate from each other
depending on the realization of the mortality rate between now and then.

Motivated by such models of mortality, in this paper we
derive the optimal consumption function for both retirees; one who believes in
-- and operates under a -- stochastic mortality and one who does not. Stated
differently, we will solve the (consumption only) \citet{Y1965} model where the
optimal consumption plan is given as a function of wealth, time \emph{and the
evolving mortality rate as a state variable}. Indeed, with thousands of LCM
papers in the economic literature over the last 50 years, and the growing interest in stochastic
mortality models in the actuarial community, we believe these results will be
of interest to both communities of researchers.

Recall that in the Yaari model conditioning on the mortality rate was
redundant or unnecessary since its evolution over time was deterministic. All
one needed was the value of wealth $F(t)$ and time $t$. But, in a stochastic
mortality model, the mortality rate itself becomes a state variable. In this
paper we show how the uncertainty of mortality interacts with longevity risk
aversion ($\gamma$) -- which is the reciprocal of the intertemporal elasticity
of substitution -- to yield an optimal consumption plan. Mortality no longer functions 
as just a discount rate.

To briefly preview our results, we describe the conditions under which retiree
1 (deterministic mortality) will start-off consuming more than retiree 2
(stochastic mortality), as well the conditions under which retiree 1
consumes less than retiree 2, and the (surprising) conditions under which
they both consume exactly the same. We provide numerical examples under a
variety of specific mortality models and examine the magnitude of this effect.

The remainder of this paper is organized as follows. In Section \ref{forceofmortality} 
we explain
in more detail exactly how a stochastic model of mortality differs from the
more traditional (and widely used in economics) deterministic force of
mortality. In Section \ref{reviewYaari} we take the opportunity to review the 
(consumption only)
\citet{Y1965} model and set our notation and benchmark for the stochastic
model. In Section \ref{optimalConsumption} we characterize the 
optimal consumption function in the
stochastic mortality model under the most general assumptions, and prove a
theorem regarding the relationship between consumption in the two models. In
Section \ref{examples} we make some specific assumptions 
regarding the stochastic
mortality rate and illustrate the magnitude of this effect, and Section \ref{conclusion}
summarizes our main results and concludes the paper. The appendix contains
mathematical details and algorithms that are not central to our main economic contributions.

First, we explain exactly the difference between deterministic and stochastic
force of mortality.

\section{Understanding The Force of Mortality}
\label{forceofmortality}

In most of the relevant papers in the LCM literature over the last 45 years
the force of mortality from time zero to the last possible date of death is
known with certainty. Ergo, the conditional survival probabilities over the
entire retirement horizon are known (in advance) at time zero. So, if a
65-year-old retiree is told (by his doctor) that he faces a 5\% chance of
surviving to age 100 and a 37\% chance of surviving to age 90, then by
definition there is a 13.5\% = (0.05/0.37) probability of surviving to age
100, if he is still alive at age 90. In other words, he makes consumption
decisions today that trade-off utility in different states of nature, knowing
that if-and-when he reaches the age of 90, there will only be a 13.5\% chance
he will survive to age 100. In the language of actuarial science, the table of
individual $\{q_{x+i};i=0,\dots,N\}$ mortality rates is known in advance. This
is the essence of a deterministic force of mortality and textbook life
contingencies. If $q_{65}$ is the retiree's probability of dying between age
65 and 66, while $q_{66}$ is the probability of the same retiree dying between
age 66 and 67, then the probability of surviving from age 65 to age 67 is
$(1-q_{65})(1-q_{66})$.

In stark contrast, under a stochastic force of mortality the above
multiplicative relationship breaks down. We do not know in advance how 
survival probabilities will evolve. While a
65-year-old might currently face a 5\% estimated probability of surviving to
age 100 and a 37\% chance of reaching age 90, there is absolutely no guarantee
that the conditional survival probability from any future age, to age 100
(given the observed mortality rates), will satisfy the ratio. At time zero
there is an expectation of what the probability will be at age 90. But, the
probability itself is random. This way of thinking -- which might be new to
economists -- is the essence of a stochastic force of mortality and is the
impetus for our paper.

Here it is formally. Let $\lambda(t)$ denote the mortality rate of a cohort of
a population, which may be stochastic or deterministic. Let $\mathcal{F}%
_{t}=\sigma\{\lambda(q)\mid q\leq t\}$ be the filtration determined by
$\lambda$. Then individuals in the population have lifetimes of length
$\zeta$ satisfying
\begin{equation}
P(\zeta>s\mid\zeta>t,\mathcal{F}_{\infty})=e^{-\int_{t}^{s}\lambda(q)\,dq}.
\end{equation}
Assume further that $\lambda(t)$ is a Markov process, and define the survival
function $p(t,s,\lambda)$ by
\begin{equation}
p(t,s,\lambda)=E\left[  e^{-\int_{t}^{s}\lambda(q)\,dq}\mid\lambda
(t)=\lambda\right]  .
\end{equation}
This gives the conditional probability of surviving from time $t$ to time $s$,
given knowledge of the mortality rate at time $t$. Therefore
\begin{multline}
P(\zeta>s\mid\zeta>t,\mathcal{F}_{t})\\
=E\left[  e^{-\int_{t}^{s}\lambda
(q)\,dq}\mid\mathcal{F}_{t}\right]  =p(t,s,\lambda(t)).
\end{multline}
If $t=0$ then we write $p(s,\lambda)$ for $p(0,s,\lambda)$.

Our basic problem in this paper will be to compare optimal consumption under
two models that share a common initial value $\lambda_{0}$ of the mortality
rate, as well as a common survival function $p(t,\lambda_{0})$. Typically one
will be deterministic and one stochastic. When we do actual computations, we
will either choose a specific deterministic model and calibrate a stochastic
model to it, or conversely, we will choose a stochastic model and calibrate
the deterministic model to it. Both possibilities are discussed below. It
should be clear from the context which model we are discussing. But when it is
necessary to make this distinction explicitly, we will write $\lambda
^{\text{DfM}}(t)$ and $\lambda^{\text{SfM}}(t)$.

\subsection{Deterministic force of Mortality (DfM)}
\label{dfm}

Let $\lambda_{0}=\lambda(0)$ be the initial value of the mortality rate. In
the deterministic case,
\begin{equation}
p(t,\lambda_{0})=e^{-\int_{0}^{t}\lambda(q)\,dq},
\end{equation}
and we can recover $\lambda(t)$ as $-p_{t}(t,\lambda_{0})/p(t,\lambda_{0})$,
where the $t$-subscript denotes the time derivative. In other words, if we
start with a concrete stochastic model, and obtain the survival curve
$p(t,\lambda_{0})$ from it, the above formula determines the calibration of
the deterministic force of mortality model. This approach is
computationally simpler, but has the disadvantage that neither the stochastic
nor deterministic model is in a simple form, familiar to and used by practitioners. In
other words, a \textquotedblleft simple\textquotedblright\ model for the
stochastic force of mortality rates leads to a \textquotedblleft
complicated\textquotedblright\ model for the deterministic force of mortality,
and vice versa.

When doing actual calculations we will start by assuming that $\lambda(t)$
follows a standard Gompertz model. 
The Gompertz model was introduced in 1825, but more 
recently was popularized by \citet{C1994}, for example. Alternative models
are presented in \citet{GV1998} and others are discussed as
early as \citet{B1961}. In our case, we use:
\begin{equation}
d\lambda(t)=\eta\lambda(t)\,dt
\end{equation}
so $\lambda(t)=\lambda_{0}e^{\eta t}$. The usual form for Gompertz is
$\lambda(t)=b^{-1}e^{(x+t-m)/b}$, so here we are using $\eta=1/b$ and
$\lambda_{0}=b^{-1}e^{(x-m)/b}$. This model is simple, and takes advantage of
long experience calibrating the Gompertz model to real populations.

Note that in the deterministic setting,
\begin{multline}
p(t,s,\lambda(t))=e^{-\int_{t}^{s}\lambda(q)\,dq}\\
=e^{-\int_{0}^{s}%
\lambda(q)\,dq}/e^{-\int_{0}^{t}\lambda(q)\,dq}=\frac{p(s,\lambda_{0})}{p(t,\lambda
_{0})}. \label{deterministic.rule}%
\end{multline}
This will typically NOT be true in the stochastic setting. As long as we keep
in mind that we are calibrating at time 0 (i.e. to $p(t,\lambda_{0})$ only)
that should not cause problems.%

\begin{figure*}
\begin{center}
\begin{tabular}
[c]{||c||c||c||c||c||c||c||c||c||}\hline\hline
\multicolumn{9}{||c||}{\textbf{Table 1: Conditional Survival Probability:
Deterministic Mortality}}\\\hline\hline
& $x=65$ & $x=70$ & $x=75$ & $x=80$ & $x=85$ & $x=90$ & $x=95$ &
$x=100$\\\hline\hline
To Age 65 & 1.000 &  &  &  &  &  &  & \\\hline\hline
To Age 70 & 0.9479 & 1.000 &  &  &  &  &  & \\\hline\hline
To Age 75 & 0.8659 & 0.9135 & 1.000 &  &  &  &  & \\\hline\hline
To Age 80 & 0.7429 & 0.7837 & 0.8580 & 1.000 &  &  &  & \\\hline\hline
To Age 85 & 0.5733 & 0.6047 & 0.6620 & 0.7716 & 1.000 &  &  & \\\hline\hline
To Age 90 & 0.3696 & 0.3899 & 0.4268 & 0.4975 & 0.6447 & 1.000 &  &
\\\hline\hline
To Age 95 & 0.1758 & 0.1855 & 0.2031 & 0.2367 & 0.3067 & 0.4757 & 1.000 &
\\\hline\hline
To Age 100 & 0.0500 & 0.0527 & 0.0577 & 0.0673 & 0.0872 & 0.1353 & 0.2844 &
1.000\\\hline\hline
$\lambda(x)$ & 0.0081 & 0.0137 & 0.0232 & 0.0394 & 0.0667 & 0.1129 & 0.1911 &
0.3234\\\hline\hline
\end{tabular}
\end{center}
\end{figure*}

Table 1 displays a typical (loosely based on U.S. unisex annuitant mortality)
deterministic mortality survival probability \textquotedblleft
matrix\textquotedblright\ of values together with the corresponding mortality
rate at each age $x$, on the bottom row. Note that these numbers were
generated using a (deterministic) Gompertz model in which $m=89.335$ and
$b=9.5$. Indeed, given the initial probability of survival from age 65 to any
age $y>65$ (which is the first column in Table 1) one can solve for the
conditional survival probability from age $y$ to any age $z>y$, by dividing
the two probability values. This is the essence of equation
(\ref{deterministic.rule}). Alas, when mortality rates are stochastic all
numbers $p(t,s,\lambda(t))$ beyond the first column in Table 1, are unknown
at time zero.

\subsection{Stochastic Force of Mortality (SfM)}
\label{sfm}

There are many possible stochastic models to choose from. Starting from the
models of \citet{LC1992}, \citet{CBD2006}
as well as \citet{WS2010}, actuaries have employed a variety of
specifications for the stochastic $\lambda(t)$, subsequently used to price
mortality and longevity risk. In what follows in the numerical examples, we
adopt a lognormal mortality rate, which is often called the Dothan model for
interest rates in the derivative pricing literature -- see \citet{D1978}. Although it might seem
natural to have constant drift and diffusion coefficients, in order to
calibrate to a given deterministic model, we allow a time-dependent growth
coefficient. For most of the numerical examples provided later-on we take:%
\begin{equation}
d\lambda(t)=\mu(t)\lambda(t)\,dt+\sigma\lambda(t)\,dB(t) \label{SFM.model1}%
\end{equation}
where $B(t)$ is a Brownian motion. This is obviously the source of randomness
in the stochastic force of mortality.
There are many ways to select (or calibrate) a stochastic force of mortality
to a particular survival curve.
The details on how to actually compute this are provided in the second part of
the appendix.

With the probability background out of the way, we now review the (consumption
only) \citet{Y1965} model which is based on a deterministic force of mortality.

\section{Review of the \citet{Y1965} Model}
\label{reviewYaari}

The canonical lifecycle model (LCM) with a random date of death and assuming no
bequest motive, can be written as follows:
\begin{equation}
J=\max_{c}E\Big[  \int_{0}^{D}e^{-\rho t}u(c(t))1_{\{t\leq\zeta\}}dt\Big]
,\label{objective}%
\end{equation}
where $\zeta$ is the remaining lifetime satisfying $\Pr
[\zeta>t]=p(t,\lambda_{0})$, defined above in Section \ref{forceofmortality}. 
We fix a (deterministic) 
last possible time $D$ of death, so $\zeta\le D$. When the mortality
rate is deterministic one can obviously assume independence between the
optimal consumption $c^{\ast}(t)$ and the lifetime indicator variable
$1_{\{t\leq\zeta\}},$ so that by Fubini's theorem we can re-write the value
function as:%
\begin{align}
J &  =\max_{c}\int_{0}^{D}e^{-\rho t}u(c(t))E[1_{\{t\leq\zeta\}}]dt\nonumber\\
&  =\max_{c}\int_{0}^{D}e^{-\rho t}u(c(t))p(t,\lambda_{0})dt.\label{objective.2}
\end{align}
From this perspective, there really is not any more randomness in the model.
This is a problem within the calculus of variations subject to some
constraints on the function $c(t)$. In the end, the survival probability is absorbed into the discount rate.

Let $r$ denote the risk free interest rate. To avoid the distractions of inflation 
models and assumptions, throughout this
paper we assume that $r$ is expressed in real
(after-inflation) terms and therefore consumption $c(t)$ is expressed in real
terms as well. The wealth (budget) constraint can then be
written as:%
\begin{equation}
F_{t}(t)=rF(t)+\pi_{0}-c(t),\label{budget}%
\end{equation}
with boundary conditions $F(0)=W>0$ and $F(D)=0$. We are using the subscript
$F_{t}$ to denote a first derivative w.r.t time, and if needed $F_{tt}$ for
the second derivative. The parameter $\pi_{0}$ denotes a constant income rate
which we include in this section for comparison with Yaari's model, but which
in subsequent sections will be taken to equal zero; $c(t)$ is the consumption
rate and the control variable in our problem.
In a follow-up paper we hope to examine the impact of additional factors, 
such as different interest  rates for borrowing versus lending, 
of the availability of actuarial notes (i.e. the case when the
interest rate is $r+\lambda(t)$).

In this paper we operate under a constant relative risk aversion (CRRA)
formulation for the utility function. In principle this should mean using
$\bar{u}(c)$, where:
\begin{equation}
\bar{u}(c)=\frac{c^{1-\gamma}-1}{1-\gamma}%
\end{equation}
for $\gamma>0$ and $\gamma\neq1$, with the understanding that when $\gamma=1$
we define $\bar{u}(c)=\ln c$. This family of utilities varies continuously
with $\gamma$. The marginal utility of consumption is the derivative of utility with respect
to $c$, which is simply
\begin{equation}
u_{c}=c^{-\gamma}>0.
\end{equation}

Of course, it makes no difference to our optimization problem (and the optimal control) 
if we shift $\bar{u}$ by an arbitrary additive constant. So to make scaling relationships
easier to express, actual calculations will be carried out using the
equivalent utilities
\begin{equation}
u(c)=\frac{c^{1-\gamma}}{1-\gamma}%
\end{equation}
for $\gamma>0$ and $\gamma\neq1$. When $\gamma=1$ we take $u(c)=\bar{u}%
(c)=\ln c$.

As a consequence of the Euler-Lagrange Theorem, the optimal financial capital
trajectory $F(t)$ must satisfy the following linear second-order
non-homogenous differential equation over the values for which $F(t)\neq0.$
\begin{multline}
F_{tt}(t)-\left(  \frac{r-\rho-\lambda(t)}{\gamma}+r\right)  F_{t}(t)+r\left(
\frac{r-\rho-\lambda(t)}{\gamma}\right)  F(t)\\
=-\left(  \frac{r-\rho
-\lambda(t)}{\gamma}\right)  \pi_{0}. \label{Basic.ODE}%
\end{multline}
When the pension income rate $\pi_{0}=0$ the differential equation collapses
to the homogenous case. See \citet{C1992}, for examplem for an exposition of Euler-Lagrange
equations in economics. 

\subsection{Explicit Solution: Gompertz Mortality}
\label{Gompertz}

When the (deterministic) mortality rate function obeys the (pure) Gompertz law
of mortality
\begin{equation}
\lambda(t)=\frac{1}{b}\exp\left(  \frac{x+t-m}{b}\right)  ,
\end{equation}
the survival probability can be expressed as
\begin{equation}
p(t,\lambda_{0})=e^{-\int_{0}^{t}\lambda(q)\,dq}
=e^{  b\lambda_{0}(1-e^{t/b})}  .
\end{equation}
Here $x$ denotes the age at time 0, $m$ is called the modal value and $b$ is
the dispersion coefficient for the Gompertz model. To simplify notation let
\begin{equation}
k(t)=\frac{r-\rho-\lambda(t)}{\gamma}, \label{k.define}%
\end{equation}
and recall from the budget constraint that:
\begin{align}
c(t)  &  =rF(t)-F_{t}(t)+\pi_{0},\label{c.eq1}\\
c_{t}(t)  &  =rF_{t}(t)-F_{tt}(t). \label{c.eq2}%
\end{align}
Equation (\ref{Basic.ODE}) can be rearranged as
\begin{equation}
F_{tt}(t)-rF_{t}(t)+k(t)(rF(t)-F_{t}(t))=-k(t)\pi_{0}, \label{Basic.ODE2}%
\end{equation}
which then leads to%
\begin{equation}
k(t)c^{\ast}(t)-c_{t}^{\ast}(t)=0 \label{ODE.C}%
\end{equation}
The solution to this basic equation is%
\begin{multline}
c^{\ast}(t)    =c^{\ast}(0)e^{\int_{0}^{t}k(s)ds}=c^{\ast}(0)e^{\int_{0}%
^{t}\frac{r-\rho-\lambda(s)}{\gamma}  ds}\\
=c^{\ast}(0)e^{
\frac{r-\rho}{\gamma} t}e^{-\frac{1}{\gamma}\int_{0}^{t}%
\lambda(s)\,ds}\label{consumption}
=c^{\ast}(0)e^{\frac{r-\rho}{\gamma} t}p(t,\lambda
_{0})^{1/\gamma},
\end{multline}
where $c^{\ast}(0)$ is the optimal initial consumption rate, to be determined,
which is the one free constant resulting from equation (\ref{ODE.C}). Note
that when the interest rate $r$ is equal to the subjective discount rate
$\rho$, and $\gamma=1$ (i.e. log utility), the optimal consumption rate at any
age $x+t$ is the probability of survival to that age times the initial
consumption $c^{\ast}(0)$. However, when $\gamma>1$, which implies higher
levels of risk aversion, the optimal consumption rate will decline at a slower
rate as the retiree ages. Longevity risk aversion induces people to behave as
if they were going to live longer than determined by the actuarial mortality
rates. We will explore the impact of $\gamma$ on the optimal consumption path
in a stochastic force of mortality model, later in 
Section \ref{optimalConsumption}, which is why
it's important to focus on this here.

Mathematically one can see that $(p(t,\lambda_{0}))^{1/(\gamma+\varepsilon)}$
is greater than $(p(t,\lambda_{0}))^{1/\gamma}$ for any $\varepsilon>0$ since
$p(t,\lambda_{0})<1$ for all $t$. Finally, note that in the Gompertz mortality
model evaluating $(p(t,\lambda_{0}))^{1/\gamma}$ for a given $(x,m,b)$ triplet
is equivalent to evaluating $p(t,\lambda_{0})$ under the same $x,b$ values,
but assuming that $m^{\ast}=m+b\ln\gamma$. This then implies that one can
tilt/define a new deterministic mortality rate $\hat{\lambda}_{0}%
=\gamma\lambda_{0}$ and derive the optimal consumption as if the individual
was risk neutral. This will be used later in the explicit expression for
$F(t)$ and $c^{\ast}(t)$

Moving on to a solution for $F(t)$, we now substitute the optimal consumption
solution (\ref{consumption}) into equation (\ref{c.eq1}) to arrive at yet
another first-order ODE, but this time for $F(t)$:%
\begin{equation}
F_{t}(t)-rF(t)-\pi_{0}+c^{\ast}(0)e^{\frac{r-\rho}{\gamma}
t}p(t,\lambda_{0})^{1/\gamma}=0. \label{ODE.W}%
\end{equation}
Writing down the canonical solution to this equation leads to:
\begin{multline}
F(t)=  e^{rt}\left(  \pi_{0}\int_{0}^{t}e^{-rs}ds\right.\\
\left.-c^{\ast}(0)\int_{0}^{t}e^{
\frac{r-\rho}{\gamma}  s}p(s,\lambda_{0})^{1/\gamma}e^{-rs}%
ds+F(0)\right), \label{wealth}%
\end{multline}
where $F_{0}$ denotes the free initial condition from the ODE for $F(t)$ in
equation (\ref{ODE.W}). Recall that we still haven't specified $c^{\ast}(0)$,
the initial consumption). We will do so (eventually) by using the terminal
condition $F(D)=0$.

To represent the wealth trajectory explicitly define the following (new)
Gompertz Present Value (GPV) function%
\begin{align}
a_{x}^{T}(r,&m,b)   =\int_{0}^{T}p(s,\lambda_{0})e^{-rs}ds=\int_{0}%
^{T}e^{-\int_{0}^{s}\left(  r+\lambda(t)\right)  dt}ds\nonumber\\
&  =\int_{0}^{T}e^{-\int_{0}^{s}\left(  r+\frac{1}{b}e^{\left(  \frac
{x-m+t}{b}\right)  }\right)  dt}ds\label{GPV}\\
&  =\frac{b\Gamma(-rb,\exp\{\frac{x-m}{b}\})-b\Gamma(-rb,\exp\{\frac{x-m+T}%
{b}\})}{\exp\{(m-x)r-\exp\{\frac{x-m}{b}\}\}}.\nonumber%
\end{align}
The function $a_{x}^{T}(r,m,b)=a(t)$ is the age$-x$ cost of a life-contingent
annuity that pays \$1 per year continuously provided the annuitant is still
alive, but only until time $t=T$, which corresponds to age $x+T$. If the
individual survives beyond age $(x+T)$ the payout stops. Naturally, when
$T=\infty$ the expression collapses to a conventional single premium income
annuity (SPIA).

Note that $\Gamma(A,B)$ is the incomplete Gamma function. In other words,
equation (\ref{GPV}) is analytic and in closed-form.

The reason for introducing the GPV is that combining equation (\ref{wealth})
with equation (\ref{GPV}) leads to the (very tame looking) expression%
\begin{equation}
F(t)=\left(  F(0)+\frac{\pi}{r}\right)  e^{rt}-a_{x}^{t}(r-k,m^{\ast
},b)c^{\ast}(0)e^{rt}-\frac{\pi_{0}}{r}, \label{wealth.explicit}%
\end{equation}
where recall that $m^{\ast}=m+b\ln\gamma$. Then, using the boundary
condition $F_{\tau}=0$, where $\tau\,$\ is the wealth depletion time, we
obtain an explicit expression for the initial consumption%
\begin{equation}
c^{\ast}(0)=\frac{\left(  F(0)+\pi_{0}/r\right)  e^{r\tau}-\pi_{0}/r}%
{a_{x}^{\tau}(r-k,m^{\ast},b)e^{r\tau}}. \label{optimal.consumption.DFM}%
\end{equation}

\subsection{Consumption Under DfM: Numerical Examples}
\label{consumpEGs}

In our numerical examples we assume an 86.6\% probability that a 65-year-old
will survive to the age of 75, a 57.3\% probability of reaching 85, a 36.9\%
probability of reaching 90, a 17.6\% probability of reaching age 95 and a 5\%
probability of reaching 100. These are the values generated by the Gompertz
law with $m=89.335$ and $b=9.5$. To complete the parameter specifications
required for our model, we assume the subjective discount rate ($\rho$) is
equal to the risk-free rate $r=2.5\%$ Within the context of a lifecycle model,
this implies that the optimal consumption rates would be constant over time in
the absence of longevity and mortality uncertainty.

We are now ready for some results. Assume a 65-year-old with a (standardized)
\$100 nest egg. Initially we allow for no pension annuity income ($\pi_{0}=0$)
and therefore all consumption must be sourced to the investment portfolio
which is earning a deterministic interest rate $r=2.5\%$. The financial
capital $F(t)$ must be depleted at the very end of the lifecycle, which is
time $D=(120-65)=55$ and there are no bequest motives. So, according to
equation (\ref{optimal.consumption.DFM}), the optimal consumption rate at
retirement age 65 is \$4.605 when the risk aversion parameter is $\gamma=4$
and the optimal consumption rate is (higher) \$4.121 when the risk aversion
parameter is set to (higher) $\gamma=8$.

As the retiree ages $(t>0)$ he/she rationally consumes less each year -- in
proportion to the survival probability adjusted for $\gamma$. For example, in
our baseline $\gamma=4$ level of risk aversion, the optimal consumption rate
drops from \$4.605 at age 65, to \$4.544 at age 70 (which is $t=5$), then
\$4.442 at age 75 (which is $t=10$), then \$3.591 at age 90 (which is $t=25$)
and \$2.177 at age 100 (which is $t=35$), assuming the retiree is still alive.
A lower real interest rate ($r$) leads to a reduced optimal
consumption/spending rate. All of this can be sourced to equation
(\ref{consumption}).

Thus, one of the important insights is that a fully rational consumer will
actually spend less as they progress through retirement. The optimizer spends
more at earlier ages and reduces spending with age, even if his/her subjective
discount rate (SDR) is equal to (or less than) the real interest rate in the economy.

Intuitively the individual deals with longevity risk by planning to reduce
consumption -- if that risk materializes -- in proportion to the survival
probability, linked to their risk aversion. The \citet{Y1965} model provides a
rigorous foundation to the statement by \citet{F1930} in his book \emph{Theory
of Interest} (page 85): \emph{\textquotedblleft\ldots The shortness of life
thus tends powerfully to increase the degree of impatience or rate of time
preference beyond what it would otherwise be\ldots\textquotedblright\ }and
(page 90)\emph{ \textquotedblleft Everyone at some time in his life doubtless
changes his degree of impatience for income\ldots When he gets a little
older,\ldots he expects to die and he thinks: instead of piling up for the
remote future, why shouldn't I enjoy myself during the few years that
remain?\textquotedblright}\footnote{For additional (case specific) examples of
the \citet{Y1965} model in action during the non-labour income retirement
phase, we refer the interested reader to \citet{MH2010} or a recent
paper by \citet{L2012}.}

\subsection{Time-zero Consumption Ratio = Initial Withdrawal Rate}
\label{time0consump}

Finally, in the very specific case when $\pi_{0}=0$ (which implies that the
wealth depletion time is $\tau=D$) and the subjective discount rate $\rho=r$,
the retiree must rely exhaustively on his/her initial wealth $F_{0}$. We get%
\begin{equation}
\frac{c^{\ast}(0)}{F(0)}=\frac{1}{a_{x}^{D}(r-\lambda_{0}/\gamma,m^{\ast},b)}%
\end{equation}
We now have all the ingredients to compare with a stochastic model. This ratio
is often called the Initial Withdrawal Rate (IWR) amongst financial
practitioners and in the retirement spending literature.

\section{Optimal Consumption: General Results}
\label{optimalConsumption}

In this section we obtain the \emph{most general} optimal consumption strategy
for a retiree maximizing expected discounted utility of consumption with uncertain lifetime, 
which will include the (consumption only) \citet{Y1965} model as a special case.
Since our main focus now is on the mortality model, at this stage we make the
additional assumption $\rho=r$, that is, that the subjective discount rate
equals the interest rate in the economy. Also, in contrast to the discussion
in the previous section, we assume no exogenous pension income, so that
$\pi_{0}=0$, which then precludes any borrowing. We now assume a fixed
terminal horizon $T$, which denotes the last possible date of death. The
mathematical formulation is to find
\begin{multline}
J=\max_{\text{$c(s)$ adapted}}E\left[  \int_{0}^{T}e^{-\int_{0}%
^{s}(r+\lambda(q))\,dq}u(c(s))ds\right.\\
\left.\vphantom{\int_0^T}
\Big\vert \,\lambda(0)=\lambda,F(0)=F\right]
\end{multline}
Whereas in Section \ref{reviewYaari} of this paper we used calculus of variation 
techniques to
derive the optimal trajectory of wealth and the consumption function, 
given the inclusion of mortality as a state variable we
must resort to dynamic programing techniques to obtain the optimality
conditions. Regardless of the different techniques, we will show how the
optimal consumption function collapses to the \citet{Y1965} model when the
volatility of mortality is zero.

Define:
\begin{multline}
J(t,\lambda,F)=\max_{\text{$c(s)$ adapted}}E\left[  \int_{t}^{T}
e^{-\int_{t}^{s}(r+\lambda(q))\,dq}u(c(s))ds\right.\\
\left.\Big\vert\, \lambda(t)\vphantom{\int_t^T}
=\lambda,F(t)=F\right]  . \label{value_function}%
\end{multline}
As in the deterministic mortality model, the wealth process (which we shall
soon see is stochastic) satisfies $dF(t)=(rF(t)-c(t))dt$. Assume that there is
an optimal control. Then for that control,
\begin{multline}
E\left[   \int_{0}^{T}e^{-\int_{t}^{s}(r+\lambda(q))\,dq}%
u(c(s))ds\,\Big\vert\, \mathcal{F}_{t}\right] \\
=e^{-\int_{0}^{t}(r+\lambda(q))\,dq}J(t,\lambda(t),F(t))\\
+\int_{0}^{t}%
e^{-\int_{t}^{s}(r+\lambda(q))\,dq}u(c(s))\,ds
\end{multline}
is a martingale. This will likewise give a supermartingale under a general
choice of $c$. Applying It\^o's lemma, we obtain the following
Hamilton-Jacobi-Bellman (HJB) equation:
\begin{multline}
\sup_{c}\left\{  u(c)-cJ_{F}\right\}  +J_{t}-(r+\lambda)J\\
+rFJ_{F}+\mu(t)\lambda J_{\lambda}+\frac{\sigma^{2}
\lambda^{2}}{2}J_{\lambda\lambda}=0.
\end{multline}
If there is any possibility of confusion, we will denote this value function
$J^{\text{SfM}}(t,\lambda,F)$.

For deterministic mortality, HJB can be obtained by sending $\sigma
\rightarrow0$ with $\mu(t)=\eta$, which was equal to $1/b$ in the \citet{Y1965}
model derived in Section \ref{reviewYaari}, as
\begin{equation}
\sup_{c}\left\{  u(c)-cJ_{F}\right\}  +J_{t}-(r+\lambda)J+rFJ_{F}+\eta\lambda
J_{\lambda}=0.
\end{equation}

Moving on to the optimal consumption plan, we solve the HJB equation under
CRRA utility as follows: let
\begin{equation}
u(c)=\frac{c^{1-\gamma}}{1-\gamma},\quad J=\frac{F^{1-\gamma}}{1-\gamma
}a(t,\lambda),
\end{equation}
where the second expression results from the scaling which follows from the first, 
and apply the first order condition $c^{\ast}=J_{F}^{-\frac{1}{\gamma}}$. We
obtain $c^{\ast}=Fa^{-\frac{1}{\gamma}}$ and get the following equation for
$a(t,\lambda)$:
\begin{equation}
a_{t}-(r\gamma+\lambda)a+\gamma a^{1-\frac{1}{\gamma}}+\mu(t)\lambda
a_{\lambda}+\frac{\sigma^{2}\lambda^{2}}{2}a_{\lambda\lambda}=0
\label{main.PDE.1}%
\end{equation}
with boundary condition $a(T,\lambda)=0$.

We now solve the PDE for $a(t,\lambda)$, which we re-write as:
\begin{equation}
\beta_{t}+1-\Big(r+\frac{\lambda}{\gamma}\Big)\beta+\mu(t)\lambda\beta_{\lambda}+\frac{\gamma-1}{2\beta}%
\sigma^{2}\lambda^{2}\beta_{\lambda}^{2}+\frac{1}{2}\sigma^{2}\lambda^{2}%
\beta_{\lambda\lambda}=0. \label{main.PDE.2}%
\end{equation}
for $\beta=\beta(t,\lambda)=a(t,\lambda)^{1/\gamma}$ . 
The boundary conditions are
$\beta(T,\lambda)=0$, $\beta_{\lambda}(t,\infty)=0$ and at $\lambda=0$ we
solve 
$\beta_{t}+1-r \beta=0$. 
Note that the optimal consumption rate is
$c=F/\beta$, using shorthand notation. On to the main theorem.

\subsection{Stochastic Force of Mortality: Main Theorem}
\label{SfMTheorem}

Denote by $c^{\text{SfM}}(t,\lambda,F)$ the optimal consumption at time $t$,
given $\lambda(t)=\lambda$ and $F(t)=F$, under a stochastic force of mortality
(SfM) model. Denote by $c^{\text{DfM}}(t,F)$ the optimal consumption at time
$t$, when $F(t)=F$, under a deterministic force of mortality (DfM) model.

\begin{theorem} 
\label{maintheorem}
Assume that the survival functions for the
two models agree: $p^{\text{SfM}}(t,\lambda_{0})=p^{\text{DfM}}
(t,\lambda_{0})$ for every $t\geq0$, and that utility is $\text{CRRA}(\gamma)$
\begin{enumerate}
\item$\gamma>1\Longrightarrow c^{\text{SfM}}(0,\lambda_{0},F)
\geq c^{\text{DfM}}(0,F)$;
\item $\gamma=1\Longrightarrow c^{\text{SfM}}(0,\lambda_{0},F)
=c^{\text{DfM}}(0,F)$;
\item $0<\gamma<1\Longrightarrow c^{\text{SfM}}(0,\lambda_{0},F)
\leq c^{\text{DfM}}(0,F)$.
\end{enumerate}
\end{theorem}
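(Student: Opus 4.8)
The plan is to reduce both parts of the statement to a single comparison of the auxiliary function $\beta$ at time $0$, to treat the knife-edge case $\gamma=1$ by a linearisation, and to treat $\gamma\neq1$ by comparing value functions. I would first record the two closed forms, for fixed $F>0$. On the stochastic side, as noted just before the theorem, the HJB analysis gives $c^{\text{SfM}}(0,\lambda_{0},F)=F/\beta(0,\lambda_{0})$ for every $\gamma>0$, where $\beta=a^{1/\gamma}$ solves (\ref{main.PDE.2}), and for $\gamma\neq1$ additionally $J^{\text{SfM}}(0,\lambda_{0},F)=\tfrac{F^{1-\gamma}}{1-\gamma}\,\beta(0,\lambda_{0})^{\gamma}$. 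On the deterministic side, specialising (\ref{consumption}) to $\rho=r$, $\pi_{0}=0$, $D=T$ and imposing $F(T)=0$ gives the Yaari optimum $\hat c(s)=c^{\text{DfM}}(0,F)\,p^{\text{DfM}}(s,\lambda_{0})^{1/\gamma}$ with $c^{\text{DfM}}(0,F)=F/\beta^{\text{DfM}}(0)$, where $\beta^{\text{DfM}}(0):=\int_{0}^{T}e^{-rs}p^{\text{DfM}}(s,\lambda_{0})^{1/\gamma}\,ds$; substituting $\hat c$ into (\ref{objective.2}) yields $J^{\text{DfM}}(0,F)=\tfrac{F^{1-\gamma}}{1-\gamma}\,\beta^{\text{DfM}}(0)^{\gamma}$ when $\gamma\neq1$. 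Since $c=F/\beta$ and $F>0$, everything reduces to comparing $\beta(0,\lambda_{0})$ with $\beta^{\text{DfM}}(0)$.

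For $\gamma=1$: setting $\gamma=1$ in (\ref{main.PDE.2}) annihilates the nonlinear term $\tfrac{\gamma-1}{2\beta}\sigma^{2}\lambda^{2}\beta_{\lambda}^{2}$ and leaves the \emph{linear} equation $\beta_{t}+1-(r+\lambda)\beta+\mu(t)\lambda\beta_{\lambda}+\tfrac12\sigma^{2}\lambda^{2}\beta_{\lambda\lambda}=0$, $\beta(T,\lambda)=0$. The Feynman--Kac formula then gives $\beta(0,\lambda_{0})=E\big[\int_{0}^{T}e^{-\int_{0}^{s}(r+\lambda(q))\,dq}\,ds\,\big|\,\lambda(0)=\lambda_{0}\big]=\int_{0}^{T}e^{-rs}p^{\text{SfM}}(s,\lambda_{0})\,ds$, which by the matching hypothesis equals $\int_{0}^{T}e^{-rs}p^{\text{DfM}}(s,\lambda_{0})\,ds=\beta^{\text{DfM}}(0)$ (here $1/\gamma=1$). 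Hence $c^{\text{SfM}}(0,\lambda_{0},F)=c^{\text{DfM}}(0,F)$, proving (b).

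For $\gamma\neq1$ the key step is the inequality $J^{\text{SfM}}(0,\lambda_{0},F)\geq J^{\text{DfM}}(0,F)$, obtained by feeding the \emph{suboptimal} deterministic control $\hat c$ into the stochastic problem. The wealth path driven by $\hat c$ is $e^{rt}c^{\text{DfM}}(0,F)\int_{t}^{T}e^{-rs}p^{\text{DfM}}(s,\lambda_{0})^{1/\gamma}\,ds\geq0$ and vanishes at $T$, so $\hat c$ is admissible for the SfM problem; and since $\hat c$ is deterministic, the tower property gives $E\big[\int_{0}^{T}e^{-\int_{0}^{s}(r+\lambda(q))\,dq}u(\hat c(s))\,ds\,\big|\,\lambda(0)=\lambda_{0}\big]=\int_{0}^{T}e^{-rs}p^{\text{SfM}}(s,\lambda_{0})u(\hat c(s))\,ds$, which by the matching hypothesis and (\ref{objective.2}) equals $J^{\text{DfM}}(0,F)$; hence $J^{\text{SfM}}\geq J^{\text{DfM}}$. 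Substituting the closed forms, cancelling $F^{1-\gamma}>0$, dividing by $\tfrac{1}{1-\gamma}$ (whose sign is that of $1-\gamma$), and using that $x\mapsto x^{\gamma}$ is increasing on $(0,\infty)$, this inequality becomes $\beta(0,\lambda_{0})\leq\beta^{\text{DfM}}(0)$ when $\gamma>1$ and $\beta(0,\lambda_{0})\geq\beta^{\text{DfM}}(0)$ when $0<\gamma<1$. Since $c=F/\beta$, these are exactly (a) and (c).

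The main obstacle I anticipate lies not in this comparison but in its inputs: the rigorous verification that the ansatz $J^{\text{SfM}}=\tfrac{F^{1-\gamma}}{1-\gamma}a(t,\lambda)$, $c^{\ast}=Fa^{-1/\gamma}$ actually solves the SfM control problem, which needs existence and enough regularity for (\ref{main.PDE.1})--(\ref{main.PDE.2}) (including the behaviour of $\beta$ at $\lambda=0$ and as $\lambda\to\infty$, and integrability of the CRRA payoff along the optimum), together with the analogous point for the deterministic model. The inequality $J^{\text{SfM}}\geq J^{\text{DfM}}$ itself is robust, using only that $J^{\text{SfM}}$ is a genuine supremum over admissible controls, so once those verification facts are in hand the argument closes at once.
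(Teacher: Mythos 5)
Your proposal is correct and follows essentially the same route as the paper: the $\gamma\neq1$ cases rest on the observation that the Yaari-optimal deterministic control is an admissible (suboptimal) control for the stochastic problem whose expected payoff, by the survival-curve matching, equals $J^{\text{DfM}}(0,F)$, so that $J^{\text{SfM}}\geq J^{\text{DfM}}$, and the sign of $1-\gamma$ in the CRRA scaling then flips this into the stated consumption comparison. The only cosmetic difference is in case (b), where you obtain the annuity-factor identity $\beta(0,\lambda_{0})=\int_{0}^{T}e^{-rs}p(s,\lambda_{0})\,ds$ by Feynman--Kac on the linearised $\beta$-PDE while the paper derives the same formula from the additive (log) scaling of $J$ and the first-order condition; your explicit admissibility check and the verification caveat for the HJB ansatz are points the paper leaves implicit.
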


\begin{proof}To see this, we change point of view, and work
exclusively with the stochastic model. So we drop the SfM superscript, and
write $p=p^{\text{SfM}}$, $J=J^{\text{SfM}}$, $c^{\ast}=c^{\text{SfM}}$,
$\lambda=\lambda^{\text{SfM}}$, etc. Within that model, we pose two different
optimization problems, depending on the level of information available about
$\lambda(t)$. The value function $J(t,\lambda,F)$ solves the problem given
before in (\ref{value_function}), where $c(t)$ can be any suitable process
adapted to $\mathcal{F}_{t}$. But we define a new value function $J^{1}(t,F)$
in which we impose an additional constraint on $c(t)$, namely that it be
deterministic. More precisely,
\begin{equation}
J(0,\lambda_{0},F_{0})    =\max_{\text{$c(s)$ adapted}}E\left[  \int_{0}%
^{T}e^{-\int_{0}^{s}(r+\lambda(q))\,dq}u(c(s))ds\right]
\end{equation}
and
\begin{align}
J^{1}(0,&F_{0})    \\
&=\max_{\text{$c(s)$ deterministic}}E\left[  \int_{0}%
^{T}e^{-\int_{0}^{s}(r+\lambda(q))\,dq}u(c(s))ds\right] \nonumber\\
  &=\max_{\text{$c(s)$ deterministic}}\int_{0}^{T}e^{-rs}p(s,\lambda
_{0})u(c(s))ds.\nonumber
\end{align}
We let $c^{\ast}$ denote the optimal control for $J$, and $c^{1}$ denote the
optimal control for $J^{1}$.

Since every deterministic control $c(t)$ is also adapted, we have the basic
relationship
\begin{equation}
J(0,\lambda_{0},F_{0})\geq J^{1}(0,F_{0}).
\end{equation}
On the other hand, the above expression is exactly what the old deterministic
model would have given. That is,
\begin{equation}
J^{1}(0,F_{0})=J^{\text{DfM}}(0,F_{0})
\end{equation}
and $c^{1}=c^{\text{DfM}}$.

Due to scaling, $J(t,\lambda,F)=a(t,\lambda
)F^{1-\gamma}/(1-\gamma)$ and $c^{*}(t,\lambda,F)=a(t,\lambda)^{-1/\gamma}F$
for some function $a\ge0$. Likewise $J^{\text{DfM}}(t,F)= a_{1}(t)F^{1-\gamma
}/(1-\gamma)$ and $c^{1}= a_{1}^{-1/\gamma}F$ for some $a_{1}\ge0$. If
$\gamma>1$ then $1-\gamma<0$, so $a(0,\lambda_{0})\le a_{1}(0)$, so $c^{*}\ge
c^{1}$ at $t=0$. This shows (a). Likewise if $0<\gamma<1$ then $a(0,\lambda
_{0})\ge a_{1}(0)$, so $c^{*}\le c^{1}$ at $t=0$. This shows (c).

Recall that when $\gamma=1$, we have $u(c)=\ln c$. Earlier, when $\gamma
\neq1$, we had $u(c)=c^{1-\gamma}/(1-\gamma)$ and could make use of a scaling
relation. In other words, if $c$ is optimal for $F$, then $kc$ is optimal for
$kF$, and that leads to the expression $J(t,\lambda,kF)=k^{1-\gamma
}J(t,\lambda,F)$. Or in other words,
\begin{equation}
J(t,\lambda,F)=F^{1-\gamma}J(t,\lambda,1).
\end{equation}
With logarithmic utility, the corresponding expression is that\newline%
$J(t,\lambda,kF)=J(t,\lambda,F)+(\ln k)\int_{t}^{T}e^{-r(s-t)}p(t,s,\lambda
)\,ds$. Or in other words,
\begin{equation}
J(t,\lambda,F)=J(t,\lambda,1)+(\ln F)\int_{t}^{T}e^{-r(s-t)}p(t,s,\lambda
)\,ds.
\end{equation}
Likewise,
\begin{equation}
J^{\text{DfM}}(t,F)=J^{\text{DfM}}(t,1)+(\ln F)\int_{t}^{T}e^{-r(s-t)}%
\frac{p(s,\lambda_{0})}{p(t,\lambda_{0})}\,ds.
\end{equation}
The first order conditions in the optimization problem then imply that
\begin{align}
c^{\ast}&=F/\int_{t}^{T}e^{-r(s-t)}p(t,s,\lambda)\,ds,\nonumber\\
c^{\text{DfM}}&=F/\int_{t}^{T}e^{-r(s-t)}\frac{p(s,\lambda_{0})}{p(t,\lambda_{0})}\,ds.
\end{align}
These agree when we send $t\rightarrow0$, showing (b). 
\end{proof}

The theorem certainly proves that $\gamma=1$ is a point of indifference. The
invariance of mortality volatility when utility is logarithmic is reminiscent
of similar results in consumption theory where income negates substitution
effects. More on this later.

Note that we only use $J^{1}(0,F)$ above, not $J^{1}(t,F)$. If we had, we
would have had to be careful. The correct definition is that
\begin{multline}
J^{1}(t,F)=J^{\text{DfM}}(t,F)\\
=\max_{c(s)}\int_{t}^{T}e^{-r(s-t)}%
\frac{p(s,\lambda_{0})}{p(t,\lambda_{0})}u(c(s))ds
\end{multline}
rather than
\begin{multline}
\max_{c(s)}E\left[  \int_{t}^{T}e^{-\int_{t}^{s}(r+\lambda(q))\,dq}%
u(c(s))ds\right]  \\
=\max_{c(s)}\int_{t}^{T}e^{-r(s-t)}E\Big[p(t,s,\lambda
(t))\Big]u(c(s))ds.
\end{multline}
These quantities have connections to annuities, as suggested by the fact that
the optimal consumption rates given above are, as a fraction of wealth,
inverse annuity prices. In particular, $\int_{t}^{T}e^{-rs}p(s,\lambda
_{0})\,ds$ is the (actuarial) price of a deferred annuity, purchased at time 0
with payments starting at time $t$. While $\int_{t}^{T}e^{-rs}E[p(t,s,\lambda
(t))]\,ds$ is a forward\ annuity price. That is, if at time 0 an insurance
company guarantees (a retiree) the right to buy an annuity at time $t$ at a
price determined at time 0, then this is that price (computed actuarially,
i.e. by discounting mean cash flows).

\subsection{Intuition and Relation to Known Results}
\label{intuition}

How should one interpret our result? 
It is tempting to view stochastic mortality as simply ``more risky'' that deterministic mortality, 
but that is not in fact the reason consumption shifts. The true explanation for our result is
that the comparison can be reinterpreted equivalently as one between two
different control problems, both within the context of the stochastic hazard
rate model. Namely, a control problem in which the hazard rate is observed, 
so one can react to changes, versus one in which the hazard rate is not observed,
so the control must be determined in advance. The utility in the deterministic model
is the same as the utility for the second control problem (and indeed, this is the
basis of our proof). So the mere presence of stochastic hazard rates will not
cause a change in consumption; what shifts consumption is the ability
to react to those changes. 

There are two possible reactions to that ability to adjust consumption. One is
to shift consumption into the future, taking advantage of the ability to adjust consumption
upwards later, if the hazard rate should climb more than expected. The other reaction is
to opt to consume more now, in the knowledge that one can cut back later if it seems 
likely that one will live longer than expected. Our message is that either reaction can
be rational, and that which one is adopted depends on the person's risk aversion, with
the switch occurring at the point of logarithmic utility. The choice is between acting
more conservatively in view of the possibility one might live longer, versus acting
more aggressively in view of one's ability to react to changes in the hazard rate. 

There are other results in the literature where logarithmic utility is a qualitative point 
of indifference in behavior.
An example -- in a completely different context -- is the
classical result on the equilibrium pricing of assets derived by 
\citet{L1978}\footnote{We thank Thomas Davidoff for pointing out this analogy to the
authors, and we refer the interested reader to the lecture notes by
Christopher Caroll, available at the following website:
\texttt{econ.jhu.edu/people/ccarroll/public/lecturenotes}, for this particular
interpretation (and derivation) of the Lucas model.}. In a Lucas-type model --
under logarithmic utility preferences -- the equilibrium price
of trees (or any other income producing asset for that matter) does not depend
on the projected level of fruit output from those trees. The economic reason
for this is that there are two effects on the current equilibrium price, of an
increase in the expected future amount of fruit from trees. The first is the
fact that at any given marginal utility of consumption of the fruit, the
higher expected fruit production increases the attractiveness of owning trees
today, which raises the current price of a tree. But, at the same time, the
increased expected fruit output in future periods means higher consumption and
lower marginal utility of consumption in that future period. This effect tends
to reduce the attractiveness of owning trees today - the tree is going to pay
off more in a time when marginal utility is expected to be low -- and thus
lowers the current value (and hence price) of a tree. These two forces are the
manifestation of the (pure) income effect and substitution effect from the
theory of \emph{consumer choice}, and their net result -- i.e which actual
dominates -- depends on the shape (and curvature) of the utility function.

In the case of logarithmic utility, income and substitution effects are
of the same size and opposite sign so the two forces exactly offset each
other, leaving the current price of a tree unchanged in the face of a rise in
expected future fruit output. This is (one of) the results from \citet{L1978}.
Although it does not appear the same powers are at force in
our stochastic mortality model, this does illustrate that there are a number
of settings in which one finds that 
logarithmic utility ($\gamma=1$) is the point of indifference between two
opposing consumption effects.

\section{Optimal Consumption: Numerical Examples}
\label{examples}

We started with a particular survival probability at time zero, namely the
Gompertz mortality curve with parameters $m=89.335$ and $b=9.5$. The age
$x=65$ survival probabilities to any age $y>x$ are given in Table 1. Both
hypothetical retirees agree on these numbers, which means that their initial
mortality rate is $\lambda_{0}=(1/9.5)\exp\{(65-89.335)/9.5\}=0.008125$.

Over time retiree 1 believes his mortality rate will grow at a rate
$\eta=(1/9.5)=\allowbreak0.105\,263\,16$ per year, while retiree 2 believes
it will evolve stochastically with a time-dependent growth rate of $\mu(t)$
and a volatility $\sigma.$ The actual curve $\mu(t)$ depends on the selected
parameter for volatility, since $\mu(t)$ is constrained to match
$p(0,\lambda_{0})$. The actual process for extracting $\mu(t)$ for any given
value of $\sigma$ is rather complicated (although it is not central to our
analysis) and is placed in the appendix of this paper. With these numbers in
hand -- and specifically the function $\mu(t)$ for the drift of the mortality
rate -- we can proceed to solve the PDEs given in equation (\ref{main.PDE.1})
and (\ref{main.PDE.2}), which then lead to the desired optimal consumption
function and the initial portfolio withdrawal rate at age 65.

\begin{figure*}
\begin{center}
\begin{tabular}
[c]{||c||c||c||c||c||c||c||}\hline\hline
\multicolumn{7}{||c||}{\textbf{Table 2: Optimal Retirement Portfolio
Withdrawal Rates $c^{\ast}(0)/F_{0}$}}\\\hline\hline
\textbf{Mortality Volatility} & $\gamma=0.5$ & $\gamma=1.0$ & $\gamma=1.5$ &
$\gamma=3$ & $\gamma=5$ & $\gamma=10$\\\hline\hline
$\sigma=0$ & 7.59\% & 6.12\% & 5.58\% & 5.02\% & 4.78\% & 4.61\%\\\hline\hline
$\sigma=15\%$ & 7.52\% & 6.12\% & 5.60\% & 5.04\% & 4.80\% &
4.62\%\\\hline\hline
$\sigma=25\%$ & 7.44\% & 6.12\% & 5.62\% & 5.06\% & 4.82\% &
4.63\%\\\hline\hline
\multicolumn{7}{||c||}{Notes: Retirement age 65, interest rate $r=2\%$,
mortality $\lambda_{0}=0.0081$}\\\hline\hline
\end{tabular}
\ \ 
\end{center}
\end{figure*}

Table 2 provides a variety of numerical examples across different values of
(mortality volatility) $\sigma$ and (risk aversion) $\gamma$, once again
assuming that the retirees are both at age $x=65$ with observable mortality
rate $\lambda_{0}=0.0081$. As we proved in Section \ref{reviewYaari}, and 
discussed above,
the consumption rate is the same across all levels of mortality volatility
when $\gamma=1$. It increases relative to DfM when $\gamma>1$ and decreases
relative to DfM when $\gamma<1$. Notice the impact of stochastic mortality on
optimal withdrawal rates is reduced as the value of risk aversion increases.
Notice how at a coefficient of relative risk aversion $\gamma=10$, the
portfolio withdrawal rates are approximately 4.6\% at all listed volatility levels.

Note that the $\sigma$ values provided are rather \emph{ad hoc} and have not
been estimated from any particular demographic dataset. We refer the
interested reader to recent actuarial papers -- such as \citet{BBRZ2008} --
for an empirical discussion around the estimation of the volatility of
mortality. Our objective here is to explore whether or not mortality volatility has a
(noticeable) impact on rational behavior as opposed to on insurance pricing.

To sum up, when the coefficient of CRRA (denoted by $\gamma$) is equal to one,
and the retiree has logarithmic utility preferences, the optimal consumption
rate at time zero is identical in both models. In other words, a retiree who
cannot adjust their consumption plan as mortality rates evolve starts-off with
the exact same consumption rate as the (more knowledgeable) consumer who can
adapt to changes in mortality rates and health status. Although the path of
their respective consumption will diverge over time -- depending on the
evolution of mortality rates -- initially they are the same. 
In contrast, when the coefficient of CRRA is greater than
one and the retiree is more risk-averse compared to a logarithmic utility
maximizer, the initial consumption rate is higher in the stochastic model vs.
the deterministic model. In other words, as one might expect the ability to
adapt to changes in health status and new information about mortality rates
allows the retiree to be more generous at time zero. Finally, when the
coefficient of the CRRA is between one and zero, the result is reversed. The
canonical retiree in a stochastic mortality model will consume less compared
to their neighbor who is operating under deterministic mortality assumptions.

Not withstanding the above results, the absolute consumption rate at time zero is
uniformly higher the lower the coefficient of relative risk aversion. This is
a manifestation of longevity risk aversion. The retiree is concerned about
living a long time, and therefore consumes less today to protect themselves
and self-insure consumption in old age.

\section{Discussion and Conclusion}
\label{conclusion}

In this article we extended the lifecycle model (LCM) of consumption over a
random-length lifecycle, to a model in which individuals can adapt behavior to
new information about mortality rates. The lifecycle model of saving and consumption continues to be very popular as a foundation model for decison-making amongst financial advisors,
as recently described in the monograph by \citet{BMS2008}.

\citet{Y1964,Y1965} was the first to include lifetime uncertainty in a
Ramsey-Modigliani lifecycle model and amongst other results, he provided a
rigorous foundation for Irving Fisher's claim that lifetime uncertainty
increases consumption impatience and is akin to higher subjective discount
rates. When the mortality rate itself is stochastic, this analogy is no longer
meaningful and -- to our knowledge -- the pure lifecycle model has not been
extended into the realm of 21st century models of mortality and longevity risk.

We built this extension by assuming that (i) the instantaneous force of
mortality is stochastic and obeys a diffusion process as opposed to being
deterministic, and (ii) that a utility-maximizing consumer can adapt their
consumption strategy to new information about their mortality rate (a.k.a.
current health status) as it becomes available. Our diffusion model for the
stochastic force of mortality was quite general, but inspired by (a.k.a.
borrowed from) the recent literature in actuarial science. We focused our
modeling attention on the retirement income\ stage of the LCM where health
considerations are likely to be more prevalent and to avoid complications
induced by wages, labor and human capital consideration.

In the first part of this paper we re-derived the optimal consumption function
under a deterministic force of mortality (DfM) using techniques from the
calculus of variations. We provided a closed-form expression for the entire
consumption rate function under a Gompertz mortality assumption. With those
benchmark results in place, we derived the optimal consumption strategy under
a stochastic force of mortality (SfM), by expressing and solving the relevant
Hamilton-Jacobi-Bellman (HJB) equation. In addition to the time variable, two
state variables in the resulting PDE are current wealth and the current
mortality rate. 

Retirees with (i) no bequest motives, (ii) constant relative risk aversion
(CRRA) preferences, and (iii) subjective discount rates equal to the interest
rate are expected to consume less as they age since they prefer to allocate
consumption into states of nature where they are most likely to be alive. This
is the conventional diminishing marginal utility argument. In our model, a
positive shock to the mortality rate in the form of pleasant health news
(perhaps a cure for cancer) will reduce consumption instantaneously and
further than expected at time zero. A negative shock to the mortality rate
(for example, being diagnosed with terminal cancer) will increase consumption
beyond what was expected.

Moving forward, a natural extension would be to explore the impact of
stochastic investment returns as well as mortality rates and include a
strategic asset allocation dimension, \emph{a la} \citet{M1971}. Another item
on our research agenda is to explore the optimal allocation to health and
mortality-contingent claims in a stochastic mortality model. Recall that one
of the noted results of \citet{Y1965} is that lifecycle consumers with no bequest
motives should hold all of their wealth in actuarial notes. However, in the
presence of a stochastic mortality, it is no longer clear how an insurance
company would price pension annuities, given the systematic risk involved. In
such a model, a retiree would have to choose between investing wealth in a
tontine pool, with corresponding stochastic returns or purchasing a pension
annuity with a deterministic consumption flow, but possibly paying a
risk-premium for the privilege. We conjecture that in a stochastic mortality
framework, the optimal \emph{product allocation} is a mixture of participating 
tontines and guaranteed annuities.

\bibliography{HMS}{}

\begin{thebibliography}{45}
\expandafter\ifx\csname natexlab\endcsname\relax\def\natexlab#1{#1}\fi
\expandafter\ifx\csname url\endcsname\relax
  \def\url#1{\texttt{#1}}\fi
\expandafter\ifx\csname urlprefix\endcsname\relax\def\urlprefix{URL }\fi

\bibitem[{Attanasio and Weber(2010)}]{AW2010}
Attanasio, O., Weber, G., 2010. Consumption and saving: models of intertemporal
  allocation and their implications for public policy. Journal of Economic
  Literature 48, 693--751.

\bibitem[{Babbel and Merrill(2006)}]{BM2006}
Babbel, D., Merrill, C., 2006. Rational decumulation. Working paper, Wharton
  Financial Institutions Centre.

\bibitem[{Bauer et~al.(2008)Bauer, Borger, and Zwiesler}]{BBRZ2008}
Bauer, D., Borger, J., Zwiesler, H., 2008. The volatility of mortality.
  Asia-Pacific Journal of Risk and Insurance 3, 184--211.

\bibitem[{Bodie et~al.(2004)Bodie, Detemple, Ortuba, and Walter}]{BDOW2004}
Bodie, Z., Detemple, J., Ortuba, S., Walter, S., 2004. Optimal
  consumption-portfolio choice and retirement planning. Journal of Economic
  Dynamics and Control 28, 1115--1148.

\bibitem[{Bodie et~al.(2008)Bodie, McLeavey, and Siegel}]{BMS2008}
Bodie, Z., McLeavey, D., Siegel, L. (Eds.), 2008. The Future of Life-cycle
  Savings and Investing, 2nd Edition. CFA Institute.

\bibitem[{Bommier and Villeneuve(2012)}]{BV}
Bommier, A., Villeneuve, B., 2012. Risk aversion and the value of risk to life.
  Journal of Risk and Insurance 79~(1), 77--103.

\bibitem[{Brillinger(1961)}]{B1961}
Brillinger, D., 1961. A justification of some common laws of mortality.
  Transactions of the Society of Actuaries 13, 116--119.

\bibitem[{Butler(2001)}]{B2001}
Butler, M., 2001. Neoclassical life-cycle consumption: a textbook example.
  Economic Theory 17, 209--221.

\bibitem[{Cairns et~al.(2006)Cairns, Blake, and Dowd}]{CBD2006}
Cairns, A., Blake, D., Dowd, K., 2006. A two-factor model for stochastic
  mortality with parameter uncertainty. Journal of Risk and Insurance 73~(4),
  687--718.

\bibitem[{Carri\`{e}re(1994)}]{C1994}
Carri\`{e}re, J., 1994. An investigation of the {G}ompertz law of mortality.
  Actuarial Research Clearing House 2, 1--34.

\bibitem[{Chiang(1992)}]{C1992}
Chiang, A., 1992. Elements of Dynamic Optimization. Waveland Press, Long Grove,
  IL.

\bibitem[{Cocco and Gomes(2009)}]{CG2009}
Cocco, J., Gomes, F., 2009. Longevity risk and retirement savings. Working
  paper, London Business School.

\bibitem[{Dahl(2004)}]{D2004}
Dahl, M., 2004. Stochastic mortality in life insurance: market reserves and
  mortality-linked insurance contracts. Insurance: Mathematics and Economics
  35, 113--136.

\bibitem[{Davies(1981)}]{D1981}
Davies, J., 1981. Uncertain lifetime, consumption and dissaving in retirement.
  Journal of Political Economy 89, 561--577.

\bibitem[{Deaton(1991)}]{D1991}
Deaton, A., 1991. Saving and liquidity constraints. Econometrica 59~(5),
  1221--1249.

\bibitem[{Dothan(1978)}]{D1978}
Dothan, L., 1978. On the term structure of interest rates. The Journal of
  Financial Economics 6, 59--69.

\bibitem[{Dybvig and Liu(2005)}]{DL2005}
Dybvig, P., Liu, H., 2005. Lifetime consumption and investment: retirement and
  constrained borrowing. Working paper, John M. Olin School of Business.

\bibitem[{Feigenbaum(2008)}]{F2008}
Feigenbaum, J., 2008. Can mortality risk explain the consumption hump? Journal
  of Macroeconomics 30, 844--872.

\bibitem[{Fisher(1930)}]{F1930}
Fisher, I., 1930. The Theory of Interest: As Determined by Impatience to Spend
  Income and Opportunity to Invest It. Macmillan, New York.

\bibitem[{Gutterman and Vanderhoof(1998)}]{GV1998}
Gutterman, S., Vanderhoof, I., 1998. Forecasting changes in mortality: a search
  for a law of causes and effects. North American Actuarial Journal 2~(4),
  135--138.

\bibitem[{Kingston and Thorp(2005)}]{KT2005}
Kingston, G., Thorp, S., 2005. Annuitization and asset allocation with {HARA}
  utility. Journal of Pension Economics and Finance 4~(3), 225--248.

\bibitem[{Kotlikoff(2008)}]{K2008}
Kotlikoff, L., 2008. Economics' approach to financial planning. Journal of
  Financial Planning 21~(3), 42--52.

\bibitem[{Lachance(2012)}]{L2012}
Lachance, M., 2012. Optimal onset and exhaustion of retirement savings in a
  life-cycle model. Journal of Pension Economics and Finance 11~(1), 21--52.

\bibitem[{Lee and Carter(1992)}]{LC1992}
Lee, R., Carter, L., 1992. Modeling and forecasting the time series of {U.S.}
  mortality. Journal of the American Statistical Association 87, 659--671.

\bibitem[{Leung(1994)}]{L1994}
Leung, S., 1994. Uncertain lifetime, the theory of the consumer and the life
  cycle hypothesis. Econometrica 62~(5), 1233--1239.

\bibitem[{Levhari and Mirman(1977)}]{LM1977}
Levhari, D., Mirman, L., 1977. Savings and consumption with an uncertain
  horizon. Journal of Political Economy 85~(2), 265--281.

\bibitem[{Lucas(1978)}]{L1978}
Lucas, R., 1978. Asset prices in an exchange economy. Econometrica 46,
  1429--1445.

\bibitem[{Menoncin(2008)}]{M2008}
Menoncin, F., 2008. The role of longevity bonds in optimal portfolios.
  Insurance: Mathematics and Economics 42, 348--358.

\bibitem[{Merton(1971)}]{M1971}
Merton, R., 1971. Optimal consumption and portfolio rules in a continuous time
  model. Journal of Economic Theory 3~(4), 373--413.

\bibitem[{Milevsky and Huang(2010)}]{MH2010}
Milevsky, M., Huang, H., 2010. Spending retirement on planet {V}ulcan: the
  impact of longevity risk aversion on optimal withdrawal rates. Financial
  Analysts Journal 67~(2), 45--58.

\bibitem[{Milevsky and Promislow(2001)}]{MP2001}
Milevsky, M., Promislow, S., 2001. Mortality derivatives and the option to
  annuitize. Insurance: Mathematics and Economics 29, 299--318.

\bibitem[{Modigliani(1986)}]{M1986}
Modigliani, F., 1986. Lifecycle, individual thrift and the wealth of nations.
  American Economic Review 76~(3), 297--313.

\bibitem[{Modigliani and Brumberg(1954)}]{MB1954}
Modigliani, F., Brumberg, R., 1954. Utility analysis and the consumption
  function: an interpretation of cross-section data. In: Kurihara, K. (Ed.),
  Post Keynesian Economics. Rutgers University Press, New Brunswick NJ, pp.
  388--436.

\bibitem[{Norberg(2010)}]{N2010}
Norberg, R., 2010. Forward mortality and other vital signs -- are they the way
  forward? Insurance: Mathematics and Economics 47~(2), 105--112.

\bibitem[{Park(2006)}]{P2006}
Park, M., 2006. An analytical solution to the inverse consumption function with
  liquidity constraints. Economics Letters 92, 389--394.

\bibitem[{Phelps(1962)}]{P1962}
Phelps, E., 1962. The accumulation of risk capital: a sequential utility
  analysis. Econometrica 30~(4), 729--743.

\bibitem[{Pitacco et~al.(2008)Pitacco, Denuit, Haberman, and
  Olivieri}]{PDHO2008}
Pitacco, E., Denuit, M., Haberman, S., Olivieri, A., 2008. Modelling Longevity
  Dynamics for Pensions and Annuity Business. Oxford University Press, UK.

\bibitem[{Post(2010)}]{P2010}
Post, T., 2010. Individual welfare gains from deferred life annuities under
  stochastic mortality. Working paper, Netspar.

\bibitem[{Ramsey(1928)}]{R1928}
Ramsey, F., 1928. A mathematical theory of saving. The Economic Journal
  38~(152), 543--559.

\bibitem[{Richard(1975)}]{R1975}
Richard, S., 1975. Optimal consumption, portfolio and life insurance rules for
  an uncertain lived individual in a continuous time model. Journal of
  Financial Economics 2, 187--203.

\bibitem[{Stevens(2009)}]{S2009}
Stevens, R., 2009. Annuity decisions with systematic longevity risk. Working
  paper, Netspar.

\bibitem[{Wallmeier and Zainhofer(2007)}]{WZ2007}
Wallmeier, M., Zainhofer, F., 2007. How to invest over the lifecycle: insights
  from theory. Journal f\"{u}r Betriebswirtschaft 56~(4), 219--244.

\bibitem[{Wills and Sherris(2010)}]{WS2010}
Wills, S., Sherris, M., 2010. Securitization, structuring and pricing of
  longevity risk. Insurance: Mathematics and Economics 46~(1), 173--185.

\bibitem[{Yaari(1964)}]{Y1964}
Yaari, M., 1964. On the consumer's lifetime allocation process. International
  Economic Review 5~(3), 304--317.

\bibitem[{Yaari(1965)}]{Y1965}
Yaari, M., 1965. Uncertain lifetime, life insurance and the theory of the
  consumer. The Review of Economic Studies 32~(2), 137--150.

\end{thebibliography}
\bibliographystyle{elsarticle-harv}

\appendix

\section{Matching Time-Zero Survival Curves}
\label{appendix}

The calibration of our economic model leads to an interesting by-product
problem in actuarial finance. In particular, in order to construct a
stochastic force of mortality that matches or fits a pre-determined Gompertz
survival curve -- the most popular and frequently used analytic law in this
literature -- one requires a lognormal diffusion process in which the drift
itself grows even faster than exponentially over time. In this appendix we
explain the mechanics of the procedure.

Given a deterministic model (Gompertz in our numerical examples), we compute
the time-zero survival function $p(t,\lambda_{0})$. We match this using a
stochastic model, by a suitable choice of parameters. This means that at time
0 the two models deliver identical survival probabilities. Recall that at
times other than $t=0$ the comparison will no longer be meaningful, even
controlling for the current observed mortality rate, because the mismatch
between conditional survival probabilities means that the two models give
different views of lifetimes going forward.

Let $\Lambda(t)=e^{-\int_{0}^{t}\lambda(q)\,dq}$ and define a pseudo-density
$q(t,\lambda)$ by the formula
\begin{equation}
E[\Lambda(t)\phi(\lambda(t))]=\int_{0}^{\infty}\phi(\lambda)q(t,\lambda
)\,d\lambda.
\end{equation}
Then $p(t,\lambda_{0})=\int_{0}^{\infty}q(t,\lambda)\,d\lambda$. By It\^o's
lemma,
\begin{multline}
\phi(\lambda(t))\Lambda(t)= \phi(\lambda_{0})+\int_{0}^{t}\Lambda(s)\Bigg[
\mu(s)\lambda(s)\phi^{\prime}(\lambda(s))\\
+\frac{\sigma^2}{2}\lambda
(s)^{2}\phi^{\prime\prime}(\lambda(s))-\lambda(s)\phi(\lambda(s))\Bigg]
\,ds\\
+\int_{0}^{t}\Lambda(s)\mu(s)\lambda(s)\phi^{\prime}(\lambda(s))\,dB(s).
\end{multline}
Take expectations and differentiate with respect to $t$. We get
\begin{multline}
\int_{0}^{\infty}\phi(\lambda)q_{t}(t,\lambda)\,d\lambda=\int_{0}^{\infty
}\Bigg[  \mu(t)\lambda\phi^{\prime}(\lambda)\\
+\frac{\sigma^{2}}{2}\lambda
^{2}\phi^{\prime\prime}(\lambda)-\lambda\phi(\lambda)\Bigg]  q(t,\lambda
)\,d\lambda
\end{multline}
with initial condition $q(0,\cdot)=\delta_{\lambda_{0}}$. Using integration by
parts (for $\phi$ vanishing fast at $0$ and $\infty$), we have
\begin{multline}
q_{t}(t,\lambda)
=-\mu(t)\frac{\partial}{\partial\lambda}\left[  \lambda
q(t,\lambda)\right]  \\
+\frac{\sigma^{2}}{2}\frac{\partial^{2}}{\partial
\lambda^{2}}\left[  \lambda^{2}q(t,\lambda)\right]  -\lambda q(t,\lambda).
\end{multline}
So if $\mu(t)$ is known for $0\leq t\leq t_{1}$, then all expectations
$\int_{0}^{\infty}q(t_{1},\lambda)\phi(\lambda)\,d\lambda$ can be found by
solving the forward equation for $q$ and then integrating against $\phi$.

Let $\lambda_{(1)}=\int_{0}^{\infty}\lambda qd\lambda$ and $\lambda_{(2)}%
=\int_{0}^{\infty}\lambda^{2}qd\lambda$ be the first two moments of
$q(t,\lambda)$. Note that the zeroth moment is the survival probability, so we
can integrate (by parts) the forward PDE for $q$ and the product of $\lambda$
and the forward PDE and obtain the following relationships
\begin{align}
\lambda_{(1)}  &  =-\frac{dp}{dt},\nonumber\\
\lambda_{(2)}  &  =\mu(t)\lambda_{(1)}-\frac{d\lambda_{(1)}}{dt}.
\end{align}
Combined the two expressions, we have
\begin{equation}
\mu(t)=\frac{\frac{d\lambda_{(1)}}{dt}+\lambda_{(2)}}{\lambda_{(1)}}.
\end{equation}
Replacing $\mu(t)$ in the forward PDE for $q$ and obtain an
integro-differential equation
\begin{multline}
q_{t}(t,\lambda)=-\frac{\frac{d\lambda_{(1)}}{dt}+\lambda_{(2)}}{\lambda
_{(1)}}\frac{\partial}{\partial\lambda}\left[  \lambda q(t,\lambda)\right]\\
+\frac{\sigma^{2}}{2}\frac{\partial^{2}}{\partial\lambda^{2}}\left[
\lambda^{2}q(t,\lambda)\right]  -\lambda q(t,\lambda),
\end{multline}
or
\begin{multline}
q_{t}(t,\lambda)=\frac{-\frac{\partial^{2}p}{\partial t^{2}}+\int_{0}^{\infty
}\lambda^{2}q(t,\lambda)d\lambda}{\frac{\partial p}{\partial t}}\frac
{\partial}{\partial\lambda}\left[  \lambda q(t,\lambda)\right]  \\
+\frac{\sigma^{2}}{2}\frac{\partial^{2}}{\partial\lambda^{2}}\left[  \lambda
^{2}q(t,\lambda)\right]  -\lambda q(t,\lambda),
\end{multline}
which we can solve numerically with the initial condition $q(0,\lambda
)=\delta(\lambda-\lambda_{0})$.

We solve the integro-differential equation for $q$ numerically first, obtain
the value of $\mu(t)$. We then solve the HJB equation for optimal consumption
as before, with the constant $\mu$ now replaced by the function $\mu(t)$ at
$\lambda=0$.

Finally, we should record a couple of remarks about the form of $\mu(t)$.
First of all,
\begin{equation}
\mu(0)=\eta.
\end{equation}
To see this, observe that $p_{t}(0,\lambda_{0})=-E[\lambda_{0}\Lambda
_{0}]=-\lambda_{0}$. So $\lambda_{0}^{2}=E[\lambda_{0}^{2}\Lambda_{0}%
^{2}]=p_{tt}(0,\lambda_{0})+\lambda_{0}\mu(0)$. But $p_{tt}(0,\lambda_{0})$
can be computed explicitly, since it is Gompertz, to give $\lambda_{0}%
^{2}-\lambda_{0}\eta$. This implies that $\mu(0)=\eta$.

Second, note that $\mu(t)$ should be increasing in $\sigma$. The mean
$E[e^{-\int_{0}^{t} \lambda(q)\,dq}]$ does not change with $\sigma$, so by
convexity of the exponential, the median of this quantity must decrease as we
increase the variance. In other words, $\mu(t)$ must rise. Put another way,
this expectation is driven by the possibility of relatively larger values of
the exponent, ie of abnormally low values of $\lambda$. As $\sigma$ rises, the
impact of longevity risk gets more pronounced, and to compensate for that the
growth rate $\mu(t)$ must also rise.

\end{document}